\theoremstyle{plain}
\newtheorem{thm}{Theorem}[section]
\begin{document}

\title{A new method for deriving incidence rates from prevalence data and its application to
dementia in Germany}

\author{Ralph Brinks\footnote{rbrinks@ddz.uni-duesseldorf.de}\\
Institute for Biometry and Epidemiology\\German Diabetes Center\\
Düsseldorf, Germany}

\date{}

\maketitle

\begin{abstract}
This paper descibes a new method for deriving incidence rates of a chronic disease from 
prevalence data. It is based on a new ordinary differential equation, which relates 
the change in the age-specific prevalence to the age-specific incidence and mortality rates.
The method allows the extraction of longtudinal information from cross-sectional studies.
Applicability of the method is tested in the prevalence of dementia in Germany. The derived age-specific
incidence is in good agreement with published values.
\end{abstract}

\emph{Keywords:} Chronic diseases; Dementia; Incidence; 
Prevalence; Mortality; Compartment model; Cubic spline; Ordinary differential equation.

\section{Introduction}
Basic epidemiological characteristics of a disease are the
\emph{prevalence}, the proportion of diseased persons in 
the population, and the
\emph{incidence}, which focusses on the number of new cases. Both
characteristics are fundamentally different: the first measures the 
actual presence of the disease, the second refers to the new
cases. Typically, prevalence and incidence of a disease are surveyed 
in observational studies. The prevalence can easily be assessed in 
cross-sectional studies: The study population is interviewed or examined 
with respect to the disease. The classical
approach to measure incidence is the cohort study, which is
somewhat more complex. A certain group of patients is 
examined whether the disease exists at the start of the study. The
healthy individuals of the group will be examined at least once more at a later
point in time, to find out whether the disease occured in the meanwhile. 
Since at least one follow-up investigation must take place, a
cohort study mostly is much more complex and expensive than a cross-sectional study. 
Particular difficulties arise due the fact that participants get lost after the baseline
examination (loss to follow-up).

\bigskip

For some questions, the incidence of a disease is more important than knowing the number 
of those who are already ill. Many of the questions in health services 
research, such as the allocation of resources need
information about the number of expected patients. Within epidemiology, there are several attempts
to derive incidences from prevalence data. A simple, popular example may illustrate this.
Consider a closed population of size $N$; this means there is no migration and the numbers of 
births and deaths are exactly the same for the considered period of length $\Delta t > 0$. Let $C$ denote the
number of persons in the population who suffer from a chronic disease ($C$ stands for
\emph {cases}). Assuming that the number of diseased persons for the time period is constant, 
it follows that the number of new cases is just equal to the number of patients
who die. Hence, it holds
\begin {equation*}
 \left (N - C  \right) \cdot i \cdot \Delta t = C \cdot m_1 \cdot \Delta t,
\end{equation*}
where $i$ is the incidence rate and $m_1$ is the mortality of the diseased\footnote{For 
later use we denote the mortality of the healthy and the diseased with $m_0$ 
and $m_1$, respectively. The subindex dichotomizes the presence of the disease.}.
By defining the (overall) prevalence $p := \tfrac{C}{N},$ the term $\tfrac{C}{N - C}$ 
can be expressed as $\tfrac{C}{N - C} = \tfrac{p}{1-p}$, which is called 
\emph{prevalence odds}. Since the inverse of the mortality $m_1$ is the
mean duration $d$ of the disease, it follows:

\begin{equation*}
\frac{p} {1-p} = i \cdot d.
\end{equation*}

This corresponds to the often found statement that the prevalence odds
equals the product of incidence and disease duration (see for example
\citep{Szk07}). For rare diseases ($1-p \approx 1 $) this reads
as: prevalence equals the product of incidence and duration. 

\bigskip

Beside this simple example, a number of more complex approaches exist
to estimate the incidence from prevalence data. The article by
\citet{Lan99} gives an overview. This work reports about a new method, 
which is based on a simple
compartment model and uses an ordinary
differential equation (ODE) to express transitions between
the compartments. Compartment models in epidemiology 
go back at least until the early 1990s
(see for example \citep{Kei90}). \citeauthor{Mur94}
from the \emph{Harvard Center for Population and Development Studies}
describe that they express the transitions between the compartments in terms of 
ODEs \citeyearpar{Mur94}. Without quoting Keiding, they call their model \emph{Harvard
Incidence-Prevalence Model}. Unfortunately, they do not describe their equations. 
Later they publish another (slightly more complicated) 
compartment model and present the associated ODEs, \citep {Mur96}. 
Our approach builds on the original model of Keiding and a two-dimensional 
system of ODEs. By analytical
transformations this two-dimensional system can be reduced to an
one-dimensional ODE. The reduced equation to our knowledge has not yet been published 
by other groups.  We take this equation further to derive the
incidence rate from prevalence data. In contrast to the multi-dimensional system
of Murray and Lopez, the one-dimensional equation can be solved easily for the 
incidence.

\bigskip

This paper is organized as follows: Section \ref{sec:methods}
describes the newly discovered link between the
age distributions of the prevalence, incidence and
mortality rates. In Section \ref{sec:appl} the new
method is applied to data of the statutory health
insurance (SHI) in Germany. The age distribution of
persons with a diagnosis of dementia is used to derive the incidence
rates in the associated age groups. Finally, the results 
are discussed in section \ref{sec:discussion}.

\section{The new relation beween incidence, prevalence and mortality}\label{sec:methods}
Compartment models are widespread in medicine and other sciences, \citep{God83}. 
In epidemiology of infectious diseases
they play a prominent role with a variety of applications, \citep{Kee07}.
A simple model for the study of non-infectious diseases 
is shown in Figure \ref{fig:3states}. Three states \emph{Normal, Disease}
and \emph{Death} are considered, plus the transitions between
the states, \citep{Kei91, Mur94}. In general, the transition rates depend on the
calendar time $t$ (sometimes called the \emph{period}) and the age $a$. Henceforth,
only irreversible diseases are considered. The transition from the state 
\emph{Disease} to the state \emph{Death} often depends on the duration 
$d$ of the disease. The influence of calendar time reflects, for example, the
change in mortality or medical progress over decades.

\begin{figure*}[ht]
\centerline{\includegraphics[keepaspectratio,
width=14cm]{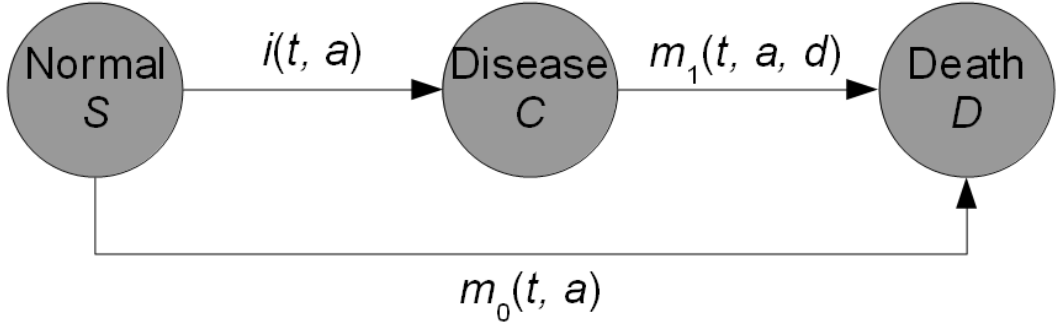}} \caption{Simple model 
of a chronic disease with three states. Persons in the state
\emph{Normal} are healthy with respect to the considered disease.
In the state \emph{Disease} they suffer from the disease. The
transition rates depend on the calendar time $t,$ on the age $a,$ and
in case of the disease-specific mortality $m_1$ also on the
disease's duration $d$.} \label{fig:3states}
\end{figure*}

As shown in Figure \ref{fig:3states} people in the population get the disease with incidence rate $i$. 
The mortality rate depends on the state: Non-diseased and diseased 
persons die with rates $m_0$ and $m_1$, respectively. Mostly, the rate $m_1$ will
be higher than the rate of $m_0$. For historical reasons, the numbers of individuals in the states 
\emph{Normal} and \emph {Disease} are denoted $S$ (susceptibles) and $C$ (cases), respectively.

\noindent Henceforth, we need the following assumptions:
\begin{enumerate}
\item The rates $i, m_0,$ and $m_1$ do not depend on calendar time $t$,
\item the mortality rate $m_1$ of the diseased does not depend on the duration $d$,
\item the population is closed (i.e. there is no migration),
\item the birth-rates of new-borns with and without the disease are constant over time.
\end{enumerate}

Furthermore, let us assume that the changes in $S$ and $C$ are proportional to the differences
of the in- and outflows to and from the compartments:
\begin{equation}\label{eq:MurrayODE}
\begin{split}
    \frac{\mathrm{d} S}{\mathrm{d} a} &= - \left ( i(a) + m_0(a) \right ) \cdot S\\
    \frac{\mathrm{d} C}{\mathrm{d} a} &= i(a) \cdot S - m_1(a) \cdot C.\\
\end{split}
\end{equation}

\bigskip

\noindent With these assumptions the central result of this work can be formulated:
\begin{thm}\label{th:central1}
Let mortalities $m, m_0 \in C^0\left(\left[0, \infty \right)\right)$ and $S, C
\in C^1\left(\left[0, \infty \right)\right)$ with $S(a) + C(a) > 0$ for all $a
\in \left[0, \infty \right),$ then the age-specific prevalence 
\begin{equation}\label{eq:prev}
p = \frac{C}{S+C}
\end{equation}
is differentiable in $\left[0, \infty \right)$ and it holds
\begin{equation}\label{eq:eq4}
\frac{\mathrm{d} p}{\mathrm{d} a} = (1-p) \cdot \left ( i - p \cdot \left (m_1 -
m_0 \right ) \right ).
\end{equation}
\end{thm}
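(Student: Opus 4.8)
The plan is to establish differentiability first and then obtain \eqref{eq:eq4} by a direct computation using the quotient rule together with the system \eqref{eq:MurrayODE}. Since $S, C \in C^1\left(\left[0, \infty\right)\right)$ and $S + C > 0$ everywhere on $\left[0, \infty\right)$, the prevalence $p = C/(S+C)$ from \eqref{eq:prev} is a ratio of $C^1$ functions with non-vanishing denominator, hence differentiable on $\left[0, \infty\right)$; this disposes of the first assertion at once.

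For the formula itself, I would introduce the total living population $N := S + C$, so that $p = C/N$ and $1 - p = S/N$. The one step worth isolating is what happens when we differentiate $N$. Adding the two equations in \eqref{eq:MurrayODE} gives
\begin{equation*}
\frac{\mathrm{d} N}{\mathrm{d} a} = \frac{\mathrm{d} S}{\mathrm{d} a} + \frac{\mathrm{d} C}{\mathrm{d} a} = -\left(i + m_0\right) S + i S - m_1 C = - m_0 S - m_1 C,
\end{equation*}
so that the incidence term $i\,S$ cancels. This reflects the modelling fact that incidence only moves individuals from \emph{Normal} to \emph{Disease} without changing the size of the living population, and it is what makes the final formula depend on $m_1 - m_0$ rather than on $i$ and the mortalities separately.

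Next I would apply the quotient rule in the form $\tfrac{\mathrm{d} p}{\mathrm{d} a} = \tfrac{1}{N}\,\tfrac{\mathrm{d} C}{\mathrm{d} a} - \tfrac{p}{N}\,\tfrac{\mathrm{d} N}{\mathrm{d} a}$ and substitute the two derivatives just obtained. Using $C/N = p$ and $S/N = 1-p$, the first term becomes $i(1-p) - m_1 p$ and the second becomes $-p\bigl(-m_0(1-p) - m_1 p\bigr)$. Collecting terms and factoring out $(1-p)$ then yields $\tfrac{\mathrm{d} p}{\mathrm{d} a} = (1-p)\bigl(i - p(m_1 - m_0)\bigr)$, which is exactly \eqref{eq:eq4}.

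I do not anticipate a genuine obstacle: once the cancellation in $\mathrm{d} N/\mathrm{d} a$ is noticed, the remainder is routine algebra. The only point requiring a little care is keeping track of the factors of $N$ so that every quantity is re-expressed through $p$ and $1-p$ alone; the continuity hypotheses on $m_0, m_1$ guarantee that the right-hand side is well defined but otherwise play no active role in the derivation.
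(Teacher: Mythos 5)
Your proposal is correct and follows exactly the paper's own route: the paper's proof is simply ``an easy application of the quotient rule to Eq.~\eqref{eq:prev} using Eq.~\eqref{eq:MurrayODE}'', which is precisely what you carry out, only with the details (the cancellation of the $i\,S$ term in $\mathrm{d}N/\mathrm{d}a$ and the collection of terms) written out explicitly. No gaps; your expanded version is a faithful elaboration of the paper's one-line argument.
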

\begin{proof}
This is an easy application of the quotient rule to Eq. \eqref{eq:prev} 
using Eq. \eqref{eq:MurrayODE}.
\end{proof}

Depending on what information is given about the mortalities, 
the ODE \eqref{eq:eq4} changes its type (see Table
\ref{t:Types}). Note that the overall
mortality $m$ in the population can be expressed as
\begin{equation}\label{eq:overall}
m(a) = (1-p(a)) \cdot m_0(a) + p(a) \cdot m_1(a).
\end{equation}
Furthermore, in some cases the relative mortality risk
$R(a) = \tfrac{m_1(a)}{m_0(a)}$ is known.

\begin{table*}[ht]
 \centering
 \def\~{\hphantom{0}}
 \begin{minipage}{140mm}
  \caption{Type of the ODE \eqref{eq:eq4}.} \label{t:Types}
  \begin{tabular*}{\textwidth}{@{}l@{\extracolsep{\fill}}l@{\extracolsep{\fill}}l@{\extracolsep{\fill}}}
  \hline \hline
Known mortality  & Right-hand side    & Type of the ODE\\
\hline
$m, m_0$   & $(1-p) \cdot (i - (m - m_0))$ & Linear \\
$m_0, m_1$ & $(1-p) \cdot (i - p \cdot (m_1 - m_0))$             & Riccati\\
$m_0, R$   & $(1-p) \cdot (i - p \cdot m_0 \cdot (R - 1))$       & Riccati\\
$m_1, R$   & $(1-p) \cdot (i - p \cdot m_1 \cdot (1 - 1/R))$     & Riccati\\
$m, m_1$   & $(1-p) \cdot \left (i - p \cdot \frac{m_1 - m}{1-p} \right )$     & Abel\\
$m, R$     & $(1-p) \cdot \left (i - m \cdot \left( 1 - \left( p \cdot \left(R -
1\right) + 1 \right)^{-1} \right ) \right)$ & Abel\\
\hline
\end{tabular*}
\end{minipage}
\vspace*{-6pt}
\end{table*}

If the ODE is the linear, it can be solved analytically. 
If it is Riccatian or Abelian, a general solution is not known, \citep{Kam83}. 
Since the overall mortality $m$
in many populations can be obtained by official life tables; and relative mortality risks $R$ for 
several diseases often are reported in epidemiological studies, the most important case
is when $m$ and $R$ are given. The following section will show an application of this.

\bigskip

Note, that independence from calendar time $t$, zero migration and 
constant birth rates are
crucial for Eq. \eqref{eq:MurrayODE}. This can be seen by realizing that the population 
size $N(a) = S(a) + C(a)$ at age $a$ fulfills the following equation:
\begin{align*}
\frac{\mathrm{d} N}{\mathrm{d} a} &= \frac{\mathrm{d} S}{\mathrm{d} a} + \frac{\mathrm{d} C}{\mathrm{d} a}\\
                                  &= -m_0 \cdot S - m_1 \cdot C\\
                                  &= - N \cdot \left [ (1-p) \cdot m_0 + p\cdot m_1 \right ].
\end{align*}
Thus, by using Eq. \eqref{eq:overall} it holds $\tfrac{\mathrm{d}
N}{\mathrm{d} a} = - m \cdot N$, which is the defining equation of
a stationary population, \citep{Pre82}. These assumptions ensure 
the population being stationary, \citep[pp. 53ff]{Pre01}.

\section{Application to dementia}\label{sec:appl}
As already described in the introduction, the most important application is 
the derivation of the age-specific incidence rate $i(a)$ from the
age distribution $p(a)$ of the prevalence of a chronic disease. 
The usefulness of the method is examined in an example of dementia. 
Prevalence of dementia in Germany is reported in the work of 
\citeauthor{Zie09} \citeyearpar{Zie09}. Basis for the values published there 
were claims data from the German statutory health insurance (SHI) in the year 2002.
About 90 percent of the whole population in Germany are members of the SHI. 
A three percent random sample of all of these is used for the analysis.
Hence, information of more than 2.3 million people are taken into account.

\smallskip

For each of the persons in the three-percent sample, demographic
data (age, gender), the number of doctor visits and
hospital stays, both with diagnostic positions in
ICD coding, are included. \citeauthor{Zie09} associate the following ICD-10-GM diagnoses 
with dementia: F00, F01, F02, F03, G30. The resulting prevalences in Germany are 
reported in Table \ref{t:Prevalence}.

\begin{table*}[ht]
 \centering
 \def\~{\hphantom{0}}
 \begin{minipage}{100mm}
  \caption{Prevalence of dementia in members of the German SHI in 2002 \citet{Zie09}} \label{t:Prevalence}
  \begin{tabular*}{\textwidth}{@{}l@{\extracolsep{\fill}}l@{\extracolsep{\fill}}
l@{\extracolsep{\fill}}}%
  \hline \hline
Age group (in years)  & Females (\%) & Males (\%)\\ \hline
60--64 & 0.6 & 0.8 \\%
65--69 & 1.3 & 1.5 \\%
70--74 & 3.1 & 3.2 \\%
75--79 & 6.8 & 5.6 \\%
80--84 & 12.8& 10.3 \\%
85--89 & 23.1& 17.9 \\%
90--94 & 31.3& 24.2 \\%
\hline
\end{tabular*}
\end{minipage}
\vspace*{-6pt}
\end{table*}


\bigskip

The prevalence data show that dementia is more frequent in women
aged $\ge$ 75 years than in men in same age group. Unfortunately 
\citeauthor{Zie09} have not reported confidence intervals or \emph{p-}values
to decide whether the differences in the age groups are significant. Due to the
large sample size this is likely.

\bigskip

For the application of the one-dimensional ODE \eqref{eq:eq4} we need
statements about the mortality. Here, we use the general mortality $m$
as surveyed by the Federal Statistical Office of Germany. The relative mortality $R$ 
of persons with dementia can be found in \citep{Rai10}: In the first year after the 
diagnosis of the disease, it is about $3.7$ and in subsequent years about
$ 2.4.$ In this work, the relative mortality is set to be constant at $R(a) = 2.4.$

\bigskip

In order to derive the incidence rate $i(a)$ from Eq. \eqref{eq:eq4}, the following
steps are performed:
\begin{enumerate}
    \item Derive a spline function $s(a)$ that interpolates the prevalence data.
    \item Calculate the derivative $\tfrac{\mathrm{d}s}{\mathrm{d}a}$ and define the function
         $$c = \frac{\mathrm{d}s / \mathrm{d}a}{1 - s}.$$
    \item The incidence rate $i(a)$ can be expressed by $$i(a) = c(a) +
    m(a) \cdot \left( 1 - \left( s(a) \cdot \left(R(a) - 1\right) + 1 \right)^{-1}
    \right).$$
\end{enumerate}
The spline is used to transform the discrete values of the prevalence data
into a differentiable function. Here, the uniquely defined cubic spline
with natural bounding conditions that interpolates the prevalence data is chosen.
It is two times differentiable. Calculations are performed with the statistical software R 
(The R Foundation for Statistical Computing), version 2.12.0.

\bigskip

Using the prevalence data as shown in Table \ref{t:Prevalence} as
input values, following results are obtained with the algorithm described above (Table
\ref{t:results}).

\begin{table*}[ht]
 \centering
 \def\~{\hphantom{0}}
 \begin{minipage}{140mm}
  \caption{Age-specific incidence rates for dementia as calculated with the new method.} \label{t:results}
\begin{tabular*}{\textwidth}{@{}l@{\extracolsep{\fill}}l@{\extracolsep{\fill}}
l@{\extracolsep{\fill}}}%
  \hline \hline
Age group  & Females & Males\\ 
(in years) & (per 100 person-years) & (per 100 person-years) \\ \hline
60--64 & 0.1 & 0.1 \\
65--69 & 0.2 & 0.3 \\
70--74 & 0.6 & 0.5 \\
75--79 & 1.2 & 1.1 \\
80--84 & 2.9 & 2.6 \\
85--89 & 5.4 & 4.8 \\
90--94 & 9.7 & 8.4 \\
\hline
\end{tabular*}
\end{minipage}
\vspace*{-6pt}
\end{table*}

\section{Discussion}\label{sec:discussion}
This paper descibes a new method for deriving incidence rates of a chronic disease from 
prevalence data. It relies on a simple compartment model with three states and 
transitions between these. 
With the assumptions that the transition rates just depend on age $a$ and the 
population is stationary, a one-dimensional ODE relates 
the change in the age-specific prevalence to the incidence and mortality rates. 
After the age stratified prevalence data is transformed into a differentiable form, the 
ODE can be solved for the age-specific incidence rate. For this purpose, the natural cubic
interpolation spline is used. So far, this choice is arbitrary, there might be better
ones.

\bigskip

While the incidence in the system \eqref{eq:MurrayODE} can only be extracted 
with sophisticated methods (for example with a restricted
optimization), the approach based on the ODE \eqref{eq:eq4} and the spline
is considerably less computationally intensive.
Computation time can be a problem, because typically
the prevalence data are fraught with errors and 
a sensitivity analysis should be performed. In such
sensitivity analyses, many (thousands) of constellations
of the input data (i.e. the prevalence in the age groups) are generated
and the changes in the result (age-specific
incidence rates) are monitored. In a validation study of the
method treating data from dialysis patients, 1500
optimizations took more than six hours on an AMD Quad-Core PC with
2.6 GHz.

\bigskip

Because the data of the SHI used for \citep{Zie09} covers a period of one year, 
four reporting periods (quarters) are spanned. Based on this, the authors try to estimate 
the incidence rate, too. When a member of the SHI gets a diagnosis of dementia
in second or third quarter but not in the first quarter, 
\citeauthor{Zie09} consider this as a potentially new case. 
To avoid false positives (dementia in the early stage is difficult to be seen), 
only those cases from the potentially new cases are finally taken into account,
in which the fourth quarter also contained a diagnosis of dementia. 
Using this method, the authors report the incidence rates as shown in Table
\ref{t:Zie09Inz}. For comparison the values of the new ODE method are shown in brackets.

\begin{table*}[ht]
 \centering
 \def\~{\hphantom{0}}
 \begin{minipage}{140mm}
  \caption{Age-specific incidence rates, \citep[Tab. 3]{Zie09}. Values
in brackets are the results of our method (cf. Tab. \ref{t:results}).} \label{t:Zie09Inz}
\begin{tabular*}{\textwidth}{@{}l@{\extracolsep{\fill}}l@{\extracolsep{\fill}}
l@{\extracolsep{\fill}}}%
  \hline \hline
Age group  & Females & Males\\ 
(in years) & (per 100 person-years) & (per 100 person-years)\\ \hline
65--69 & 0.3 (0.2) & 0.3 (0.3) \\
70--74 & 0.8 (0.6) & 0.7 (0.5) \\
75--79 & 1.8 (1.2) & 1.7 (1.1) \\
80--84 & 3.5 (2.9) & 3.0 (2.6) \\
85--89 & 6.9 (5.4) & 5.2 (4.8) \\
90--94 & 9.7 (9.7) & 7.6 (8.4) \\
\hline
\end{tabular*}
\end{minipage}
\vspace*{-6pt}
\end{table*}

Compared with the values of our method, the results in Table \ref{t:Zie09Inz} are, up to
few exceptions, higher. There are two possible reasons:
\begin {enumerate}
 \item \citeauthor{Zie09} overestimate the incidence by their method,
because prevalent cases with no doctor visit in the first quarter count as 
a newly incident case. This is very likely, since incidence estimates relying 
on one disease free quarter only are very prone to overestimations. For a recent
work reflecting on this, see \citep{Abb11}. 
 \item On the other hand, it may be that our values are systematically
too small. One reason might be the increased relative mortality
in the first year after diagnosis of dementia. Instead of the
measured value $R = 3.7 $, here $R = 2.4$ is used. Hence, our relative risk 
of death is too low, which manifests in an underestimation of 
the incidence. However, in comparison with the age-specific
incidences in other studies, \citep[Fig. 3]{Zie09},
our values are in a very good agreement.

\end{enumerate}

\appendix

\label{lastpage}

\end{document}